\documentclass[letterpaper, 10 pt, conference]{ieeeconf}  

\IEEEoverridecommandlockouts                              

\usepackage{graphics} 
\usepackage{epsfig} 
\usepackage{amsmath} 
\usepackage{amssymb}  
\usepackage{xcolor}
\usepackage{cite}

\DeclareMathOperator*{\im}{Im}
\DeclareMathOperator*{\diag}{diag}

\newtheorem{theorem}{Theorem}
\newtheorem{definition}{Definition}

\newtheorem{corollary}{Corollary}
\newtheorem{assumption}{Assumption}
\newtheorem{remark}{Remark}
\newtheorem{proposition}{Proposition}

\newcommand{\E}{\mathbb{E}}

\newcommand{\R}{\mathbb{R}}
\newcommand{\T}{^{\mathsf T}}

\title{\LARGE \bf
A Data-Driven Koopman-Behavioral Distance for Nonlinear Dynamical Systems
}

\author{Sarang Sutavani and Umesh Vaidya
\thanks{Umesh Vaidya acknowledges financial support from NSF CMMI Award 2531804. Sarang Sutavani is with the Dept. of Electrical Engr., Clemson University, Clemson SC, USA. Email: {\tt\small ssutava@clemson.edu}. 
Umesh Vaidya is with the Dept. of Mechanical Engr., Clemson University, Clemson SC, USA. Email: {\tt\small uvaidya@clemson.edu}.}%
}

\begin{document}

\maketitle
\thispagestyle{empty}
\pagestyle{empty}

\begin{abstract}

Comparing nonlinear dynamical systems directly from trajectory data remains challenging because finite-horizon trajectory representations are generally coordinate dependent. Motivated by recent behavioral subspace approaches for linear systems, this paper introduces a Koopman-behavioral distance for nonlinear stochastic systems using multi-rollout trajectory data. State measurements are lifted through a common observable dictionary, and dominant lifted behavioral subspaces are compared using Grassmannian distances without explicit Koopman operator identification. We establish coordinate invariance, connection with recently proposed linear behavioral distances under exact lifted closure, and robustness to approximate closure. Numerical examples demonstrate coordinate invariance and parameter discrimination.

\end{abstract}

\section{INTRODUCTION}

Comparing dynamical systems is a fundamental problem with implications for model validation, parameter identification, fault detection, digital twins, novelty detection, and retrieval of dynamical models from large databases \cite{mezic2004comparison,mehta2005stochastic,yu2009kullback,vaidya2005comparison}. For linear time-invariant (LTI) systems, an effective class of comparison tools is based on subspace geometry. Finite-horizon behaviors are represented by column spaces of Hankel matrices, and distances between systems are computed as Grassmannian distances between these subspaces. These behavioral subspace metrics are attractive because they are data-driven, model-free, and invariant under similarity transformations of the state \cite{willems2005fundamental,markovsky2021behavioral,padoan2025distances}.

A comparison method need not reconstruct a complete model. Many applications only require deciding whether two datasets are dynamically compatible, identifying the closest reference regime, or detecting departure from nominal behavior. This motivates a trajectory signature that is comparable across datasets and insensitive to coordinate artifacts.

Extending this paradigm to nonlinear systems is more difficult. Nonlinear dynamics are not preserved under linear coordinate changes, and finite-horizon trajectory sets generally do not admit finite-dimensional linear representations in the original state space. As a result, Hankel subspaces formed directly from trajectories can depend strongly on the chosen coordinates rather than the underlying dynamics \cite{padoan2025distances}.

Koopman operator theory provides a natural way to address this issue. By lifting the state through observables, nonlinear evolution can be represented as linear evolution in a higher-dimensional feature space \cite{koopman1931hamiltonian,mezic2005spectral,mezic2013analysis,huang2018feedback,vaidya2025koopman}. This idea underlies data-driven methods such as EDMD \cite{williams2015data}, Hankel-based Koopman methods and HAVOK \cite{brunton2017havok}, and neural Koopman representations \cite{lusch2018deep}.

Despite this progress, a direct data-driven behavioral distance for
comparing distinct nonlinear systems remains largely unexplored.
Existing operator-theoretic approaches compare transfer operators,
Koopman spectra, or trajectory probability laws
\cite{mehta2005stochastic,mezic2004comparison,yu2009kullback},
requiring operator or statistical estimation.
In contrast, the proposed framework compares the geometry of
finite-horizon lifted trajectory subspaces directly from data,
without estimating Koopman operators, spectra, or probability
distributions.

In this paper, we develop a Koopman-lifted behavioral metric for nonlinear stochastic systems of the form
\[
  x_{t+1}=f(x_t,\xi_t),
\]
where $\xi_t$ is an exogenous stochastic input. Given multi-rollout state trajectories, we lift the data through a common dictionary of observables $z_t=\psi(x_t)$ and construct finite-horizon lifted behaviors as column spaces of multi-rollout Hankel matrices. For finite stochastic data, the dominant left singular vector subspaces of these matrices are compared through principal-angle-based Grassmannian distances.

The lifting is used only to construct a trajectory signature for system comparison; it does not modify the dynamics or learn a correction or control law. A reference bank is needed only for applications such as parameter or regime matching, whereas the pairwise distance itself requires only the two datasets being compared.
The key idea is that lifting restores the structural setting in which subspace metrics are meaningful. Under exact sample-wise lifted closure, the proposed metric reduces to the classical linear finite-horizon behavioral distance. Under approximate closure, subspace perturbation theory \cite{davis1970rotation,wedin1972perturbation} yields a robustness bound governed by the ratios of residual magnitude to singular-value gap. The construction is invariant under nonlinear coordinate changes when the observables are transformed by pullback, while raw-state behavioral distances can remain coordinate dependent.

The main contributions of this paper are summarized as follows. We introduce a data-driven Koopman-behavioral distance for nonlinear stochastic dynamical systems that compares systems directly from trajectory data without identifying an explicit Koopman operator. 
We show that the proposed metric is invariant under nonlinear coordinate transformations when observables are transformed through the Koopman pullback. We establish theoretical connections with classical behavioral distance metrics by proving that the proposed framework reduces to the linear behavioral distance under stochastic Koopman closure and remains robust under approximate closure through subspace perturbation analysis.
Numerical examples demonstrate both coordinate invariance and parameter discrimination, illustrating that the proposed metric captures intrinsic dynamical differences. Taken together, these results establish a bridge between Koopman operator theory and behavioral system theory, extending geometric behavioral distances from linear systems to nonlinear stochastic dynamics.

\section{Problem Setting and Koopman Lifting}
\label{sec:setting}

Consider the discrete-time nonlinear stochastic system 
\begin{equation}
  x_{t+1}=f(x_t,\xi_t),
  \label{eq:nl_system_noise}
\end{equation}
where $x_t\in\R^n$ and $\{\xi_t\}$ is i.i.d. with common law $\nu$, exogenous, and independent of $x_0$. We assume access to full-state trajectories collected over multiple independent rollouts, possibly from different initial conditions. Let $\psi:\R^n\to\R^N$ be a dictionary of observables and define the lifted state
\begin{equation}
  z_t:=\psi(x_t)\in\R^N.
  \label{eq:lifted_state}
\end{equation}
For stochastic systems, the Koopman operator $\mathcal K$ acts on scalar observables $\varphi$ as \cite{Lasota}
\begin{equation}
  (\mathcal K\varphi)(x):=\E[\varphi(f(x,\xi))],
  \label{eq:stoch_koopman_def}
\end{equation}
where the expectation is taken with respect to {\color{black}the disturbance law $\nu$. The moments are assumed such that $\E\|\psi(x_t)\|^2<\infty$.}. Our construction does not require estimating $\mathcal K$ explicitly; the Koopman viewpoint serves to characterize when lifted subspaces provide informative system signatures. Let the dataset consist of $M$ rollouts
\[
  \mathcal D:=\Big\{\{x_t^{(j)}\}_{t=0}^{T_j}\Big\}_{j=1}^{M},
\]
with corresponding lifted trajectories $z_t^{(j)}=\psi(x_t^{(j)})$. Fix a horizon $L\in\mathbb N$. For each rollout $j$ and time $t$ satisfying $t+L-1\leq T_j$, define the stacked lifted segment
\begin{equation}
  \mathbf z_{t:t+L-1}^{(j)}:=
  \begin{bmatrix}
    z_t^{(j)}\\ z_{t+1}^{(j)}\\ \vdots\\ z_{t+L-1}^{(j)}
  \end{bmatrix}\in\R^{NL}.
  \label{eq:stacked_segment}
\end{equation}
Denote the corresponding generic random segment by
\[
  \mathbf Z_t:=
  \begin{bmatrix}z_t\T&z_{t+1}\T&\cdots&z_{t+L-1}\T\end{bmatrix}\T.
\]
Collecting all such segments across all rollouts yields the multi-rollout lifted Hankel matrix
\begin{equation}
\begin{split}
  H_L(\mathcal D):=\big[&\mathbf z_{0:L-1}^{(1)}\ \cdots\ \mathbf z_{T_1-L+1:T_1}^{(1)}\ \big|\ \cdots\ \big|\\
  &\mathbf z_{0:L-1}^{(M)}\ \cdots\ \mathbf z_{T_M-L+1:T_M}^{(M)}\big],
\end{split}
  \label{eq:multirollout_matrix}
\end{equation}
where $\tau:=\sum_{j=1}^{M}(T_j-L+2)$ is the total number of length-$L$ segments and $H_L(\mathcal D; \psi)\in\R^{NL\times\tau}$.
Each column of $H_L(\mathcal D; \psi)$ is one local length-$L$ trajectory. Pooling independent rollouts broadens coverage of initial conditions and noise realizations, although overlapping columns within a rollout remain dependent. Increasing $L$ adds temporal context but also raises the dimension $NL$ and reduces the number of available segments.

\begin{assumption}[Multi-rollout stochastic sampling]
\label{ass:stoch_sampling}
Each rollout is generated using an independent realization of the noise process and an initial condition independent across rollouts. In addition, either the initial conditions are sampled from an invariant distribution or a burn-in period is discarded so that the retained samples are approximately stationary. We assume $\E\|z_t\|^2<\infty$, and that the segment second-moment matrix $\mathcal M_L:=\E[\mathbf Z_t\mathbf Z_t\T]$ is nondegenerate on the subspace of interest.
\end{assumption}

In classical behavioral system theory for deterministic linear systems, the behavior is the set of all trajectories consistent with the system dynamics. For a finite horizon $L$, this behavior can be represented by the column space of a Hankel matrix constructed from sufficiently rich data.
In the stochastic setting considered here, the trajectories are random realizations with the noise process $\{\xi_t\}$. The multi-rollout Hankel matrix $H_L(\mathcal D; \psi)$ is a collection of samples of length-$L$ lifted trajectory segments. Its column space captures the span of the sampled lifted trajectories.
Under assumptions such as stationarity or ergodicity and sufficiently rich multi-rollout excitation, this empirical Hankel subspace approximates the population second-moment structure of lifted trajectories. The proposed metric therefore compares systems through the geometry of their lifted trajectory distributions rather than through individual trajectory realizations.

\begin{definition}[Empirical lifted finite-horizon behavior]
\label{def:empirical_behavior}
The empirical lifted finite-horizon behavior induced by $\mathcal D$ over horizon $L$ is
\begin{equation}
  \mathcal B_L(\mathcal D; \psi):=\im\!\big(H_L(\mathcal D; \psi)\big)\subset\R^{NL}.
  \label{eq:behavior_subspace}
\end{equation}
\end{definition}
When the observable map is fixed and common to a comparison, the dependence on $\psi$ is suppressed for brevity, so $H_L(\mathcal D)$ and $\mathcal B_L(\mathcal D)$ denote $H_L(\mathcal D;\psi)$ and $\mathcal B_L(\mathcal D;\psi)$, respectively; the same convention is used for $\mathcal B_{L,r}$.

At the population level, the corresponding $r$-dimensional signature is the leading eigenspace of the second-moment matrix $\mathcal M_L$. Since $H_L(\mathcal D;\psi)H_L(\mathcal D;\psi)\T/\tau$ is the empirical segment second-moment matrix, the dominant left singular vectors estimate these population directions under stationary or ergodic sampling.
Under stochastic forcing, the full empirical span can be enlarged by weak residual and finite-sample directions. If $H_L(\mathcal D;\psi)=\widehat U\Sigma_H\widehat V\T$ and $\widehat U_r$ contains the first $r$ left singular vectors, we therefore define the practical dominant behavior
\begin{equation}
  \mathcal B_{L,r}(\mathcal D;\psi):=\im(\widehat U_r).
  \label{eq:dominant_behavior_subspace}
\end{equation}
A prescribed numerical rank $r$ is held fixed across all datasets in a comparison. The full space $\mathcal B_L$ is retained for exact structural statements, while finite-sample comparisons use $\mathcal B_{L,r}$. The observable dictionary, feature normalization, horizon, and rank must be common within a comparison; otherwise changes in the representation can be mistaken for changes in the dynamics.

\section{Grassmannian Distances Between Lifted Behaviors}
\label{sec:grassmann}

Let $\mathcal B_1,\mathcal B_2\subset\R^d$ be subspaces, and let $U\in\R^{d\times k}$ and $V\in\R^{d\times\ell}$ have orthonormal columns spanning them. If $p:=\min(k,\ell)$ and
\begin{equation}
    \begin{aligned}
        U\T V=Q\Sigma R\T,
        \quad \Sigma=&\diag(\sigma_1,\ldots,\sigma_p),\\
        \quad 1\geq\sigma_1\geq\cdots\geq&\sigma_p\geq0,
    \end{aligned}
    \label{eq:svd_principal_angles}
\end{equation}
then the principal angles $\{\theta_i\}_{i=1}^{p}$ satisfy
\begin{equation}
  \cos\theta_i=\sigma_i,\qquad i=1,\ldots,p.
  \label{eq:principal_angles}
\end{equation}
Using principal angles, we consider the chordal, geodesic, and Procrustes distances
\begin{align}
    d_{\mathrm{ch}}(\mathcal B_1,\mathcal B_2)
    &=\left(|k-\ell|+\sum_{i=1}^{p}\sin^2\theta_i\right)^{1/2},
    \label{eq:chordal_distance}\\
    d_{\mathrm{geo}}(\mathcal B_1,\mathcal B_2)
    &=\left(\left(\frac{\pi}{2}\right)^2|k-\ell|+\sum_{i=1}^{p}\theta_i^2\right)^{1/2},
    \label{eq:geodesic_distance}\\
    d_{\mathrm{proc}}(\mathcal B_1,\mathcal B_2)
    &=\left(|k-\ell|+2\sum_{i=1}^{p}\sin^2(\theta_i/2)\right)^{1/2}.
    \label{eq:procrustes_distance}
\end{align}

\begin{definition}[Koopman-behavioral distance]
\label{def:koopman_behavioral_distance}
For two datasets $\mathcal D_1$ and $\mathcal D_2$, the Koopman-behavioral distance at horizon $L$ is
\begin{equation}
  d_\star^{\mathrm{NL}}(\mathcal D_1,\mathcal D_2;L)
  :=d_\star\!\left(\mathcal B_L(\mathcal D_1),\mathcal B_L(\mathcal D_2)\right),
  \label{eq:koopman_behavioral_distance}
\end{equation}
where $d_\star$ denotes one of \eqref{eq:chordal_distance}-\eqref{eq:procrustes_distance}.
For finite stochastic data, the practical distance is
\begin{equation}
  d_{\star,r}^{\mathrm{NL}}(\mathcal D_1,\mathcal D_2;L)
  :=d_\star\!\left(\mathcal B_{L,r}(\mathcal D_1),\mathcal B_{L,r}(\mathcal D_2)\right),
  \label{eq:truncated_koopman_behavioral_distance}
\end{equation}
where the same prescribed numerical rank $r$ is used for both datasets. Alternatively, a common deterministic energy threshold $\rho$ may define $r_i=r_\rho(\mathcal D_i)$ and $d_{\star,\rho}^{\mathrm{NL}}:=d_\star(\mathcal B_{L,r_1}(\mathcal D_1),\mathcal B_{L,r_2}(\mathcal D_2))$.
\end{definition}
Each nonlinear system is represented by a finite-dimensional behavioral space obtained from lifted trajectory data. The proposed distance therefore compares $r$ dominant behavioral directions rather than individual trajectories or identified models. The lifted Hankel matrix collects finite-horizon lifted trajectory segments, and its column space represents the corresponding family of realizable lifted behaviors. Grassmannian distances compare these families geometrically through their principal angles rather than through pointwise trajectory comparisons.
For equal-dimensional subspaces, all three distances vanish when the subspaces coincide and increase with the principal angles. The structural results apply to each choice; the simulations use Procrustes distance for the shear and chordal distance for the quadrotor.
Figure~\ref{fig:schema_behavior_comparison} summarizes the workflow from multi-rollout trajectories and common lifting to dominant lifted behavior subspaces and their Grassmannian comparison.

\begin{figure}
    \centering
    \includegraphics[width=0.99\linewidth]{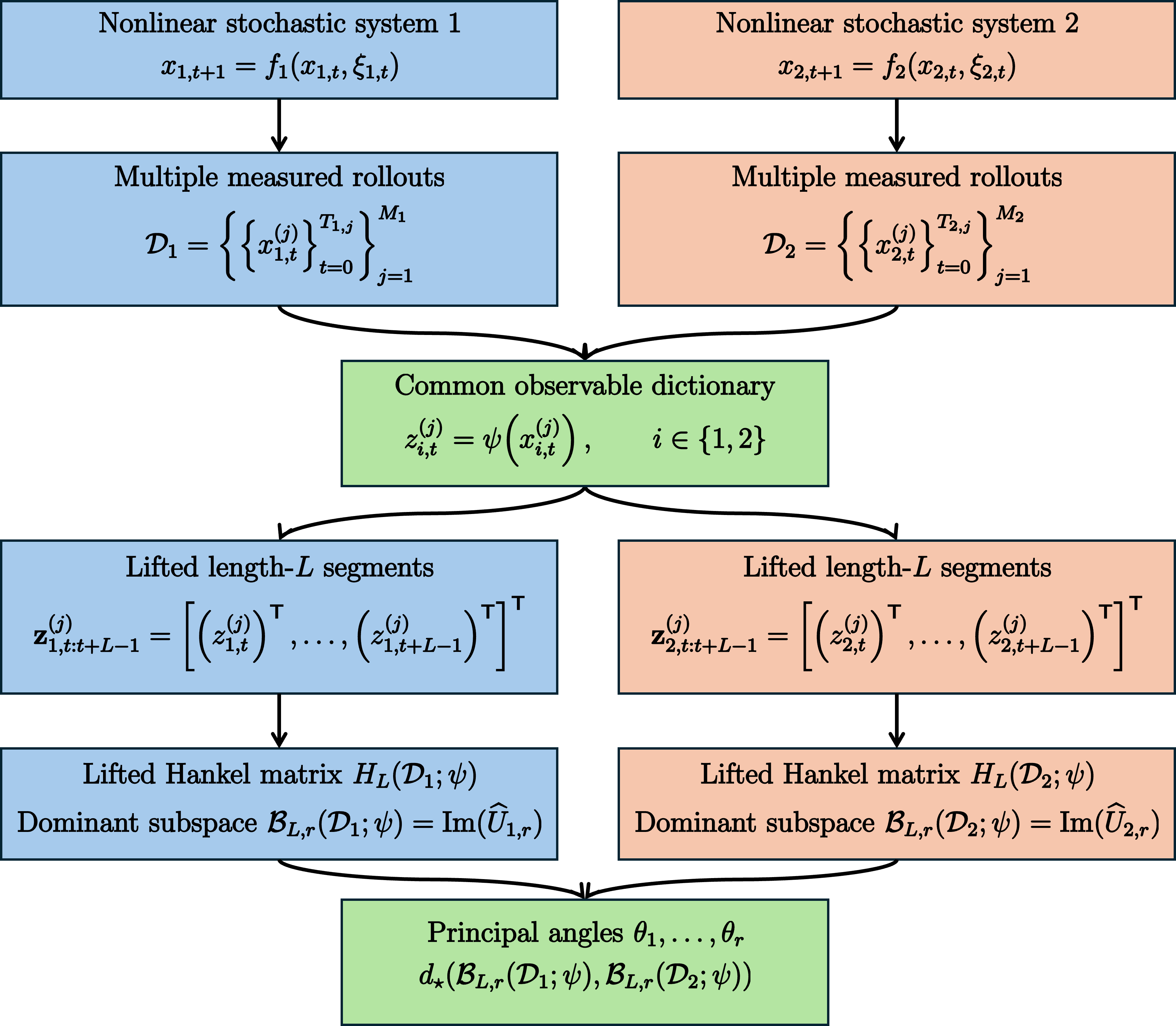}
    \caption{Koopman-behavioral comparison schematic: multi-rollout trajectories are lifted with common observables, assembled into Hankel matrices, reduced to dominant subspaces, and compared by Grassmannian distance.}
    \label{fig:schema_behavior_comparison}
\end{figure}

\subsection{Basic Invariance Properties}
\label{sec:invariance}

\begin{proposition}[Basic invariance properties]
\label{prop:invariance_multirun}
Fix a horizon $L\in\mathbb N$ and a lifting $\psi$. The full-space Koopman-behavioral distance depends only on the lifted behavior subspaces and is invariant under column reordering, invertible column reparameterizations, and common orthogonal changes of ambient coordinates.
\end{proposition}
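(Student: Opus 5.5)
The plan is to reduce everything to one observation. Each of the distances \eqref{eq:chordal_distance}--\eqref{eq:procrustes_distance} is a function only of the dimensions $k,\ell$ and of the principal angles, and these are determined by the pair of subspaces alone. Concretely, I would first show that if $U$ and $U'$ are two orthonormal bases of $\mathcal B_1$, and $V$ and $V'$ two orthonormal bases of $\mathcal B_2$, then $U'=UP$ and $V'=VS$ for some orthogonal $P\in\R^{k\times k}$ and $S\in\R^{\ell\times\ell}$. This follows because both bases span the same space with orthonormal columns, so $P=U\T U'$ is square and orthogonal, and likewise for $S$. It follows that $U'^{\mathsf T}V'=P\T(U\T V)S$. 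Hence $U'^{\mathsf T}V'$ has the same singular values $\sigma_i$ as $U\T V$, so by \eqref{eq:principal_angles} the principal angles are the same. Since $k,\ell$ are the dimensions of the subspaces, $d_\star$ in Definition~\ref{def:koopman_behavioral_distance} is a well-defined function of $(\mathcal B_L(\mathcal D_1),\mathcal B_L(\mathcal D_2))$. This is the first claim.

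The column statements then reduce to statements about images. Reordering the columns of $H_L(\mathcal D)$ means replacing it by $H_L\Pi$ with $\Pi$ a permutation matrix. An invertible column reparameterization means replacing it by $H_LG$ with $G\in\R^{\tau\times\tau}$ invertible, possibly with a different column count if the map has full row rank, which I would state as $G$ surjective onto $\R^\tau$. In either case $\im(H_LG)=\im(H_L)$: the inclusion $\subseteq$ is immediate, and $\supseteq$ follows from $H_L=(H_LG)G^{-1}$, or from a right inverse of $G$. So $\mathcal B_L$ is unchanged, and by the first step the distance is unchanged. Permutations are the special case $G=\Pi$.

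For a common orthogonal change of ambient coordinates $Q\in\R^{NL\times NL}$, both Hankel matrices become $QH_L(\mathcal D_i)$, so $\mathcal B_L\mapsto Q\mathcal B_L$ for both datasets. If $U$ and $V$ are orthonormal bases of the original subspaces, then $QU$ and $QV$ are orthonormal bases of the transformed ones, and $(QU)\T(QV)=U\T Q\T QV=U\T V$. Therefore the principal angles and the dimensions are preserved, and so is each $d_\star$.

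I do not expect a real obstacle. The only point needing care is the first step, namely showing that the principal angles, and hence $d_\star$, do not depend on which orthonormal basis is chosen, since the other two claims rest on it. A secondary point is being precise about what ``invertible column reparameterization'' allows. If the paper intends only square invertible $G$, the argument above is complete. If non-square maps are allowed, I would require a right inverse so that the image is preserved.
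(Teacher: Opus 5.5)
Your proof is correct and follows the same route as the paper's: right multiplication of $H_L$ by a permutation or an invertible matrix preserves $\im(H_L)$, and a common orthogonal left multiplication leaves $U\T V$ unchanged, so the principal angles and every $d_\star$ are preserved. Two of your steps go beyond the paper's proof without changing the argument: the explicit check that principal angles do not depend on the choice of orthonormal bases, which the paper leaves implicit, and the extension to non-square surjective $G$.
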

\begin{proof}
Column reordering and invertible right multiplication preserve $\im(H_L)$. Common orthogonal left multiplication preserves all principal angles and hence the Grassmannian distances.
\end{proof}
For truncated SVD subspaces, column permutations and orthogonal right transformations preserve the dominant subspace, whereas a general invertible right transformation need not do so. Both the full and truncated distances remain invariant under a common orthogonal change of lifted coordinates.

\begin{remark}
Principal angles are generally not invariant under arbitrary invertible non-orthogonal changes of ambient coordinates. Such invariance can be recovered by whitening the data, or equivalently by using a data-induced inner product. We leave this extension to future work.
\end{remark}

\subsection{Invariance Under Diffeomorphic Coordinate Changes}
\label{subsec:stoch_invariance}

Consider the stochastic system
\begin{equation}
  x_{t+1}=f(x_t,\xi_t),
  \label{eq:stoch_sys_x}
\end{equation}
where $\{\xi_t\}_{t\geq0}$ is an exogenous stochastic process. Let $h:\R^n\to\R^n$ be a $C^1$ diffeomorphism and define the transformed coordinates $y=h(x)$. The induced dynamics are
\begin{equation}
  y_{t+1}=g(y_t,\xi_t):=h\!\left(f(h^{-1}(y_t),\xi_t)\right).
  \label{eq:stoch_sys_y}
\end{equation}
Fix a lifting $\psi:\R^n\to\R^N$ and define its pullback under $h$ by
\begin{equation}
  \widetilde\psi:=\psi\circ h^{-1},\qquad
  z_t:=\psi(x_t),\qquad
  \widetilde z_t:=\widetilde\psi(y_t).
  \label{eq:pullback_lift}
\end{equation}
Define the transformed stacked segment by $\widetilde{\mathbf Z}_t:=[\widetilde z_t\T\ \cdots\ \widetilde z_{t+L-1}\T]\T$. For paired rollouts, let $\mathcal D_x:=\{\{x_t^{(j)}\}_{t=0}^{T_j}\}_{j=1}^M$ and $\mathcal D_y:=\{\{y_t^{(j)}\}_{t=0}^{T_j}\}_{j=1}^M$, with $y_t^{(j)}=h(x_t^{(j)})$.

\begin{theorem}[Coordinate invariance under pullback lifting]
\label{thm:stoch_invariance_lifted}
For any horizon $L\in\mathbb N$, the lifted stacked segments satisfy $\widetilde{\mathbf Z}_t=\mathbf Z_t$ for all $t$. Consequently, for paired multi-rollout datasets with $y_t^{(j)}=h(x_t^{(j)})$,
\begin{equation}
  H_L(\mathcal D_x;\psi)=H_L(\mathcal D_y;\widetilde\psi),
  \label{eq:hankel_invariance}
\end{equation}
and hence
\begin{equation}
  \mathcal B_L(\mathcal D_x;\psi)=\mathcal B_L(\mathcal D_y;\widetilde\psi).
  \label{eq:empirical_subspace_invariance}
\end{equation}
The same conclusion holds for dominant subspaces obtained with a common deterministic rank-selection rule.
\end{theorem}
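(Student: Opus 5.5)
The argument is a direct computation: every object in the statement is built column by column from the lifted samples, so it suffices to show that these samples coincide. First I would verify the pointwise identity. For every $t$ and every rollout $j$, since $y_t^{(j)}=h(x_t^{(j)})$ and $\widetilde\psi=\psi\circ h^{-1}$, we get
\[
\widetilde z_t^{(j)}=\widetilde\psi(y_t^{(j)})=\psi\bigl(h^{-1}(h(x_t^{(j)}))\bigr)=\psi(x_t^{(j)})=z_t^{(j)}.
\]
This uses only that $h$ is a bijection; the $C^1$ regularity is not needed here. For the generic random segment, I would note that if $y_0=h(x_0)$, then an induction with \eqref{eq:stoch_sys_y} gives $y_t=h(x_t)$ for all $t$ along the same noise realization $\{\xi_t\}$:
\[
y_{t+1}=h\bigl(f(h^{-1}(h(x_t)),\xi_t)\bigr)=h(x_{t+1}).
\]
Hence $\widetilde z_t=z_t$ pathwise. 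Stacking $L$ consecutive copies then gives $\widetilde{\mathbf Z}_t=\mathbf Z_t$.

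Second, I would observe that the two datasets $\mathcal D_x$ and $\mathcal D_y$ have the same number of rollouts $M$ and the same horizons $T_j$. The construction \eqref{eq:multirollout_matrix} therefore places the segment indexed by $(j,t)$ in the same column position in both matrices, and the total column count $\tau$ is the same. Since each column agrees by the first step, \eqref{eq:hankel_invariance} holds as an equality of matrices. Taking images then gives \eqref{eq:empirical_subspace_invariance} immediately from Definition~\ref{def:empirical_behavior}.

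Finally, for the dominant subspaces, the matrices are identical, so they have identical singular values. A common deterministic rank-selection rule, whether a fixed $r$ or an energy threshold $r_\rho$, is a function of these singular values and therefore returns the same $r$ for both. The only subtle point, and the one I would treat carefully, is that the SVD is not unique. The subspace $\im(\widehat U_r)$ is well defined when $\sigma_r>\sigma_{r+1}$, because it is the span of the top-$r$ eigenspace of $HH\T$. If the singular values tie at $\sigma_r=\sigma_{r+1}$, I would require that the same deterministic SVD routine be applied to both matrices. Since its input is the same matrix, it returns the same $\widehat U_r$. Either way $\mathcal B_{L,r}(\mathcal D_x;\psi)=\mathcal B_{L,r}(\mathcal D_y;\widetilde\psi)$, and consequently every Grassmannian distance built from these subspaces is also unchanged.
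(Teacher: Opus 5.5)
Your proposal is correct and follows essentially the same route as the paper: the pointwise identity $\widetilde z_t=\psi(h^{-1}(y_t))=\psi(x_t)=z_t$, then stacking to get column-by-column equality of the Hankel matrices, then taking images. Your induction along a common noise realization and your treatment of SVD non-uniqueness for the dominant-subspace clause spell out details the paper leaves implicit, but they do not change the argument.
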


\begin{proof}
By definition, $\widetilde z_t=\widetilde\psi(y_t)=\psi(h^{-1}(y_t))=\psi(x_t)=z_t$. Stacking $L$ consecutive samples and applying the identity to every admissible segment proves \eqref{eq:hankel_invariance}-\eqref{eq:empirical_subspace_invariance}.
\end{proof}

\begin{corollary}[Population invariance under pullback lifting]
\label{cor:population_invariance}
If $\{x_t\}$ is stationary and $\E\|\psi(x_t)\|^2<\infty$, then the population segment second-moment matrices coincide:
\begin{equation}
  \widetilde\Sigma_L:=\E[\widetilde{\mathbf Z}_t\widetilde{\mathbf Z}_t\T]
  =\E[\mathbf Z_t\mathbf Z_t\T]=:\Sigma_L,
  \label{eq:covariance_invariance}
\end{equation}
and their dominant population subspaces are identical.
\end{corollary}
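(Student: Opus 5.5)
The argument is a pathwise identity followed by taking expectations, so it reduces almost entirely to Theorem~\ref{thm:stoch_invariance_lifted}. The random process $\{y_t\}$ is defined from $\{x_t\}$ by $y_t=h(x_t)$, both driven by the same noise realization $\{\xi_t\}$. The first step is to note that the identity $\widetilde z_t=\widetilde\psi(y_t)=\psi(h^{-1}(h(x_t)))=\psi(x_t)=z_t$ holds for every $t$ and every sample path. It therefore holds almost surely as an identity of random vectors, and stacking $L$ consecutive terms gives $\widetilde{\mathbf Z}_t=\mathbf Z_t$ almost surely. This is exactly the first claim of Theorem~\ref{thm:stoch_invariance_lifted}, read at the level of random variables instead of data samples.

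Next I would check that the second moments are well defined. The hypothesis $\E\|\psi(x_t)\|^2<\infty$, together with stationarity (so that the bound holds uniformly in $t$), gives $\E\|\mathbf Z_t\|^2=\sum_{k=0}^{L-1}\E\|z_{t+k}\|^2<\infty$. By Cauchy--Schwarz every entry of $\mathbf Z_t\mathbf Z_t\T$ is then integrable, so $\Sigma_L$ exists. Stationarity also makes $\Sigma_L$ independent of $t$. Since $\{y_t\}$ is the image of a stationary process under the fixed measurable map $h$, it is stationary as well, so $\widetilde\Sigma_L$ is likewise independent of $t$. Almost-sure equality of the random matrices $\widetilde{\mathbf Z}_t\widetilde{\mathbf Z}_t\T$ and $\mathbf Z_t\mathbf Z_t\T$ then gives equality of their expectations, which is \eqref{eq:covariance_invariance}.

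The last step concerns the dominant population subspaces. These are defined as leading eigenspaces of the symmetric positive semidefinite matrix $\Sigma_L$. Equal matrices have identical spectral decompositions, so for any common rank $r$ the span of the top $r$ eigenvectors coincides. The only subtlety is a tie $\lambda_r=\lambda_{r+1}$, where the leading eigenspace is not unique. I would handle this by requiring the eigengap $\lambda_r>\lambda_{r+1}$, which is implicit in the nondegeneracy condition of Assumption~\ref{ass:stoch_sampling}. Without that gap, one can instead observe that any deterministic selection rule applied to the same matrix returns the same subspace.

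I expect no real obstacle. The only points needing care are the integrability and stationarity bookkeeping in the second step and the eigengap caveat in the third.
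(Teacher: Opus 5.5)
Your proposal is correct and follows the paper's route. The paper gives no separate proof and treats the corollary as an immediate consequence of the pathwise identity $\widetilde{\mathbf Z}_t=\mathbf Z_t$ from Theorem~\ref{thm:stoch_invariance_lifted}; taking expectations and noting that equal matrices have equal leading eigenspaces finishes it, as you do. One small caveat: the nondegeneracy condition in Assumption~\ref{ass:stoch_sampling} does not by itself give an eigengap $\lambda_r>\lambda_{r+1}$, but your fallback (a common deterministic selection rule applied to the identical matrix) already covers that case.
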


The theorem holds pathwise and does not rely on averaging over stochastic realizations. Independent datasets can still have nonzero empirical distance because their dominant subspaces are estimated from different samples, as examined in Example~1.

The invariance in Theorem~\ref{thm:stoch_invariance_lifted} relies on transforming the observables by pullback. If one compares trajectory subspaces directly in the raw state coordinates, this invariance is generally lost.

\begin{proposition}[Raw-state coordinate dependence]
\label{thm:stoch_noninvariance_raw}
Let $y_t=h(x_t)$ be a nonlinear diffeomorphic coordinate change. The $x$- and $y$-systems then describe the same underlying dynamics. Let $\mathcal B^{x}_{L,r}$ and $\mathcal B^{y}_{L,r}$ be the dominant finite-horizon subspaces constructed from their raw stacked trajectories. In general,
\[
  \mathcal B^{x}_{L,r}\neq\mathcal B^{y}_{L,r}.
\]
Consequently, $d_\star(\mathcal B^{x}_{L,r},\mathcal B^{y}_{L,r})$ need not vanish, and a raw-state behavioral distance can report a difference between two coordinate representations of the same dynamical system.
\end{proposition}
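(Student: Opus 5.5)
Since the statement is an ``in general'' claim, the natural plan is to exhibit a single explicit counterexample rather than argue abstractly. I would first explain the mechanism. The raw Hankel matrix in $y$-coordinates has columns $[h(x_t)\T\ \cdots\ h(x_{t+L-1})\T]\T$. Its population second-moment matrix is $\E[\mathbf Y_t\mathbf Y_t\T]$ with $\mathbf Y_t=(I_L\otimes h)(\mathbf X_t)$, where $\mathbf X_t$ denotes the raw $x$-segment. When $h$ is not linear, there is no fixed matrix $P$ with $\mathbf Y_t=P\mathbf X_t$. Hence the dominant eigenspaces of $\E[\mathbf X_t\mathbf X_t\T]$ and $\E[\mathbf Y_t\mathbf Y_t\T]$ are unrelated. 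Proposition~\ref{prop:invariance_multirun} cannot repair this, because it only covers common orthogonal left multiplication. (Even for linear non-orthogonal $h$, the remark after Proposition~\ref{prop:invariance_multirun} already shows that principal angles can change.) A separate point: for $n=1$, $L$ fixed and full rank, the two subspaces could both equal all of $\R^{L}$. So the counterexample must use a truncation $r<nL$, or a setting in which the raw $x$-data span a proper subspace.

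For the construction, I would take $n=2$, $L=1$, $r=1$ and the linear stochastic system $x_{t+1}=Ax_t+\xi_t$ with $A$ stable and Gaussian noise. Choose the stationary covariance $P=\E[x_tx_t\T]$ with distinct eigenvalues and dominant eigenvector $e_1$, for instance by taking $A$ and the noise covariance diagonal with a larger variance in the first coordinate. Then $\mathcal B^{x}_{1,1}=\mathrm{span}(e_1)$ at the population level. Next take the shear diffeomorphism $h(x)=(x_1,\ x_2+c\,x_1^2)$, which is $C^1$ with inverse $(y_1,y_2-c\,y_1^2)$. By Gaussian moment identities, $\E[y_ty_t\T]$ has off-diagonal entry $\E[x_1x_2]+c\,\E[x_1^3]=0$ and $(2,2)$ entry $P_{22}+3c^2P_{11}^2$. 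For $c$ large enough, the dominant eigenvector therefore becomes $e_2$. This gives principal angle $\pi/2$ and $d_\star>0$ for each of \eqref{eq:chordal_distance}--\eqref{eq:procrustes_distance}. If a non-axis-aligned example is preferred, using $x_2+c x_1^2+b x_1$ tilts the direction continuously instead.

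Finally, I would pass from population to empirical subspaces. By ergodicity of the stable Gaussian chain under Assumption~\ref{ass:stoch_sampling}, $H_LH_L\T/\tau$ converges almost surely to the population matrix. Because the eigengap is positive, Davis--Kahan~\cite{davis1970rotation} makes the empirical dominant subspaces converge, so with high probability $\mathcal B^{x}_{L,r}\neq\mathcal B^{y}_{L,r}$ for large datasets. For general $L$ the same example works by stacking, since the lag-zero blocks already differ.

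The main obstacle is guaranteeing a strict eigengap and an unambiguous ordering after the transformation, so that the dominant subspace is well defined and genuinely different. The explicit Gaussian fourth-moment computation above handles this by choosing $c$ large. The rest of the argument is routine.
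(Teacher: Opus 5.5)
Your proposal is correct, and it takes a more concrete route than the paper. The paper's proof is only a mechanism sketch: since $\mathbf Y_t$ is not a fixed orthogonal linear image of $\mathbf X_t$, the raw second-moment matrices and their dominant subspaces ``need not'' be orthogonally related. It never exhibits an instance with $\mathcal B^x_{L,r}\neq\mathcal B^y_{L,r}$. That criterion also does not settle the question on its own. An orthogonal $h(x)=Qx$ already typically gives $\mathcal B^y=(I_L\otimes Q)\mathcal B^x\neq\mathcal B^x$, while with $r=nL$ and full-rank data any $h$ gives equality. Proposition~\ref{prop:invariance_multirun} concerns preservation of pairwise distances, not the vanishing of $d_\star(\mathcal B^x,\mathcal B^y)$.

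Your shear/Gaussian counterexample, essentially an analytic version of Example~1, does prove the existential claim. The uncentered moments $\E[y_1y_2]=c\,\E[x_1^3]=0$ and $\E[y_2^2]=P_{22}+3c^2P_{11}^2$ are right, and they are the correct objects for the SVD of $H_L$. Your observation that one needs $r<nL$ is a point the paper omits. Ergodicity plus Davis--Kahan legitimately transfers the population eigengap to the empirical $\mathcal B_{L,r}$. The paper's sketch conveys the mechanism for arbitrary nonlinear $h$; your construction trades that generality for a verifiable instance, which is all an ``in general'' statement requires.

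One step is loose: for general $L$, ``the lag-zero blocks already differ'' does not determine the dominant eigenvector of the stacked matrix. A correct argument runs as follows.
\begin{itemize}
\item Both stacked second-moment matrices split into coordinate-1 and coordinate-2 blocks. The cross terms $\E[x_{1,s}x_{2,u}]+c\,\E[x_{1,s}x_{1,u}^2]$ vanish by independence and odd Gaussian moments.
\item The coordinate-1 blocks of the two matrices coincide.
\item The coordinate-2 block for $\mathbf Y_t$ contains $c^2P_{11}^2\mathbf 1\mathbf 1^{\top}$ plus a positive semidefinite term. Its top eigenvalue is therefore at least $c^2P_{11}^2L$ and eventually dominates.
\item Choosing, e.g., $P_{11}>P_{22}$ and $a_1\ge a_2\ge 0$ places the $x$-dominant direction in the coordinate-1 block, so the principal angle is $\pi/2$.
\end{itemize}
Since the claim is existential, your $L=1$ case already suffices and this does not affect the main argument.
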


\begin{proof}
For nonlinear $h$, the stacked raw trajectory $\mathbf Y_t=[h(x_t)\T\ \cdots\ h(x_{t+L-1})\T]\T$ is generally not obtained from $\mathbf X_t$ through a fixed orthogonal linear map. Consequently, the corresponding raw-state second-moment matrices and their dominant subspaces need not be related by an orthogonal transformation, so their principal angles are not generally preserved.
\end{proof}

This contrast motivates the pullback construction, which removes coordinate distortion before the behavioral subspaces are compared.

\subsection{Reduction Under Exact Sample-Wise Lifted Closure}
\label{sec:reduction}

The exact reduction requires sample-wise closure; conditional-mean closure alone does not place noisy trajectory segments in a deterministic linear behavior.

\begin{assumption}[Exact sample-wise lifted closure]
\label{ass:koopman_closure_stochastic}
There exists $A\in\R^{N\times N}$ such that
\begin{equation}
  z_{t+1}=Az_t\quad\text{almost surely},\qquad z_t=\psi(x_t).
  \label{eq:stochastic_koopman_closure}
\end{equation}
\end{assumption}
Define
\begin{equation}
  \mathcal O_L(A):=\begin{bmatrix}I_N&A\T&\cdots&(A^{L-1})\T\end{bmatrix}\T
  \in\R^{NL\times N},
  \label{eq:finite_horizon_lifting_matrix}
\end{equation}
where $I_N$ is the identity matrix of size $N \times N$. $\mathcal O_L(A)$ maps each starting lifted state to its length-$L$ segment, so the behavior is determined by the admissible starting-state span.

If $\mathcal S\subseteq\R^N$ is the admissible lifted-state span, define
\begin{equation}
  \mathcal B_L^{\mathrm{lin}}(A;\mathcal S):=\mathcal O_L(A)\mathcal S.
  \label{eq:ideal_lti_behavior}
\end{equation}

\begin{assumption}[Horizon-$L$ richness of multi-rollout data]
\label{ass:richness_multirun}
Let $Z_{\mathcal D}$ collect the lifted states that start the columns of $H_L(\mathcal D)$. These states satisfy $\im(Z_{\mathcal D})=\mathcal S$.
\end{assumption}

\begin{theorem}[Reduction to linear behavior]
\label{thm:reduction_linear}
Under Assumptions~\ref{ass:koopman_closure_stochastic} and \ref{ass:richness_multirun},
\begin{equation}
  H_L(\mathcal D)=\mathcal O_L(A)Z_{\mathcal D},\qquad
  \mathcal B_L(\mathcal D)=\mathcal B_L^{\mathrm{lin}}(A;\mathcal S).
  \label{eq:behavior_equals_ideal}
\end{equation}
Consequently, for two datasets satisfying the assumptions,
\begin{equation}
  d_\star^{\mathrm{NL}}(\mathcal D_1,\mathcal D_2;L)
  =d_\star\!\left(\mathcal B_L^{\mathrm{lin}}(A_1;\mathcal S_1),
  \mathcal B_L^{\mathrm{lin}}(A_2;\mathcal S_2)\right).
  \label{eq:reduction_identity}
\end{equation}
\end{theorem}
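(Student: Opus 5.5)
The plan is a direct column-by-column computation, followed by an image argument and then an appeal to the definition of $d_\star^{\mathrm{NL}}$. Every column of $H_L(\mathcal D)$ is a stacked lifted segment $\mathbf z^{(j)}_{t:t+L-1}$ with starting lifted state $z^{(j)}_t=\psi(x^{(j)}_t)$.

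First, iterate Assumption~\ref{ass:koopman_closure_stochastic} along each rollout. Since $z_{t+1}=Az_t$ holds almost surely at every time step, and the segment uses only finitely many steps, induction gives $z^{(j)}_{t+k}=A^k z^{(j)}_t$ almost surely for $k=0,\ldots,L-1$. A countable intersection of almost-sure events is still almost sure, so this holds simultaneously for all admissible $(j,t)$. Stacking these identities shows $\mathbf z^{(j)}_{t:t+L-1}=\mathcal O_L(A)z^{(j)}_t$, by the definition \eqref{eq:finite_horizon_lifting_matrix}.

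Next, collect the starting states in the same column order as \eqref{eq:multirollout_matrix}, which is exactly the matrix $Z_{\mathcal D}$ of Assumption~\ref{ass:richness_multirun}. This gives the factorization $H_L(\mathcal D)=\mathcal O_L(A)Z_{\mathcal D}$. The second identity then follows from the general fact that $\im(BC)=B\,\im(C)$ for any matrices $B,C$. Combined with the richness condition $\im(Z_{\mathcal D})=\mathcal S$, we get
\[
\mathcal B_L(\mathcal D)=\im(\mathcal O_L(A)Z_{\mathcal D})=\mathcal O_L(A)\,\im(Z_{\mathcal D})=\mathcal O_L(A)\mathcal S=\mathcal B_L^{\mathrm{lin}}(A;\mathcal S).
\]
Injectivity of $\mathcal O_L(A)$ is not even needed here, though it does hold because its first block is $I_N$, so $\dim\mathcal B_L^{\mathrm{lin}}=\dim\mathcal S$.

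Finally, for two datasets, apply this identity to each one separately with its own $(A_i,\mathcal S_i)$. Substituting into Definition~\ref{def:koopman_behavioral_distance} yields \eqref{eq:reduction_identity}, since $d_\star$ depends only on the two subspaces (Proposition~\ref{prop:invariance_multirun}). The only real subtlety, and the step I would state carefully, is the almost-sure qualifier: the equalities hold on the probability-one event where closure holds at every sampled time. Beyond that there is no genuine obstacle, since the argument is purely linear-algebraic once the closure is iterated.
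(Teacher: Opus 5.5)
Your proposal is correct and matches the paper's proof step for step: iterate the closure to get $\mathbf z_{t:t+L-1}=\mathcal O_L(A)z_t$ for each column, factor $H_L(\mathcal D)=\mathcal O_L(A)Z_{\mathcal D}$, take images using the richness assumption, and substitute into the definition of $d_\star^{\mathrm{NL}}$. Your remarks on the almost-sure qualifier and on the injectivity of $\mathcal O_L(A)$ are accurate refinements that the paper leaves implicit.
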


\begin{proof}
Every Hankel column satisfies $\mathbf z_{t:t+L-1}=\mathcal O_L(A)z_t$, so $H_L(\mathcal D)=\mathcal O_L(A)Z_{\mathcal D}$. Taking images and applying the richness assumption gives \eqref{eq:behavior_equals_ideal}; substituting into \eqref{eq:koopman_behavioral_distance} gives \eqref{eq:reduction_identity}.
\end{proof}

Richness may be supplied collectively by many short rollouts. This result is not used to estimate a lifted linear model. Rather, it establishes that when $A$ exists, the data-driven behavior coincides with that of a linear system in the lifted coordinates.

\begin{remark}[Conditional-mean closure]
If only $\E[z_{t+1}\mid x_t]=Az_t$ holds, then $z_{t+1}=Az_t+\eta_t$ with $\E[\eta_t\mid x_t]=0$. Sampled Hankel columns generally do not belong to $\mathcal B_L^{\mathrm{lin}}(A;\mathcal S)$, so no exact equality of behavior spaces is claimed. This case is treated by the residual model below when the induced Hankel perturbation is small relative to the relevant singular-value gap.
\end{remark}

\subsection{Robustness Under Approximate Closure}
\label{sec:robustness}

We now quantify the deviation from the ideal lifted linear behavior when exact closure is unavailable.

\begin{assumption}[Linear recursion with residual]
\label{ass:approx_closure}
There exists $A\in\R^{N\times N}$ such that
\begin{equation}
  z_{t+1}=Az_t+\eta_t,
  \label{eq:approx_closure}
\end{equation}
where $\eta_t$ captures modeling error and/or stochastic forcing in lifted coordinates. Let $\bar H_L(\mathcal D)$ be generated by the ideal recursion using the same initial lifted states. Then
\[
  H_L(\mathcal D)=\bar H_L(\mathcal D)+E_L(\mathcal D).
\]
\end{assumption}

Iterating \eqref{eq:approx_closure} gives
\begin{equation}
  z_{t+k}=A^kz_t+\sum_{s=0}^{k-1}A^{k-1-s}\eta_{t+s},
  \qquad k=1,\ldots,L-1.
  \label{eq:residual_accumulation}
\end{equation}
Thus each column of $E_L(\mathcal D)$ stacks residuals accumulated over one segment. Increasing $L$ exposes more residual terms and raises the subspace dimension; $E_L(\mathcal D)$ may represent stochastic forcing, dictionary mismatch, or both.

\begin{theorem}[Subspace perturbation bound]
\label{thm:robustness}
Let $\bar{\mathcal B}_{L,r}(\mathcal D)$ and $\mathcal B_{L,r}(\mathcal D)$ be the dominant $r$-dimensional left singular subspaces of $\bar H_L(\mathcal D)$ and $H_L(\mathcal D)$, respectively. Define
\[
  \Theta\!\left(\mathcal B_{L,r}(\mathcal D),\bar{\mathcal B}_{L,r}(\mathcal D)\right):=\diag(\theta_1,\ldots,\theta_r),
\]
where $\theta_1,\ldots,\theta_r$ are their principal angles. Suppose
\begin{equation}
    \begin{aligned}
          \Delta(\mathcal D):=\sigma_r(\bar H_L(\mathcal D))-\sigma_{r+1}(\bar H_L(\mathcal D))>0,\\
          \|E_L(\mathcal D)\|_2<\Delta/2.
    \end{aligned}
    \label{eq:robustness_gap_condition}
\end{equation}
Then
\begin{equation}
  \|\sin\Theta(\mathcal B_{L,r}(\mathcal D),\bar{\mathcal B}_{L,r}(\mathcal D))\|_2
  \leq \frac{2\|E_L(\mathcal D)\|_2}{\Delta(\mathcal D)},
  \label{eq:sin_theta_robustness}
\end{equation}
and for each selected Grassmannian distance there is a rank-dependent constant $C_{\star,r}>0$ such that
\begin{equation}
  d_\star(\mathcal B_{L,r}(\mathcal D),\bar{\mathcal B}_{L,r}(\mathcal D))
  \leq C_{\star,r}\frac{\|E_L(\mathcal D)\|_2}{\Delta(\mathcal D)}.
  \label{eq:distance_robustness}
\end{equation}
\end{theorem}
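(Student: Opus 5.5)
The plan is to treat $H_L(\mathcal D)=\bar H_L(\mathcal D)+E_L(\mathcal D)$ as an additive perturbation of a fixed matrix and apply Wedin's $\sin\Theta$ theorem~\cite{wedin1972perturbation} to the left singular subspaces. Write $\bar H:=\bar H_L(\mathcal D)$, $H:=H_L(\mathcal D)$, $E:=E_L(\mathcal D)$, $\bar\sigma_i:=\sigma_i(\bar H)$ and $\sigma_i:=\sigma_i(H)$. Wedin's theorem bounds the angle between $\im(\widehat U_r)$ and $\im(\bar U_r)$ by $\max(\|E\bar V_r\|_2,\|E\T\bar U_r\|_2)/\delta$. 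Here $\delta$ is any separation between the retained singular values of one matrix and the discarded singular values of the other. A convenient choice is $\delta:=\bar\sigma_r-\sigma_{r+1}$, for which $\delta>0$ is required.

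First, control the separation using Weyl's inequality for singular values, $|\sigma_i-\bar\sigma_i|\le\|E\|_2$. This gives $\sigma_{r+1}\le\bar\sigma_{r+1}+\|E\|_2$, so
\[
\delta\ \ge\ \Delta(\mathcal D)-\|E\|_2\ >\ \Delta(\mathcal D)/2 .
\]
The strict inequality uses the hypothesis $\|E\|_2<\Delta/2$ in \eqref{eq:robustness_gap_condition}. The same Weyl step gives $\sigma_r>\sigma_{r+1}$. Hence $\mathcal B_{L,r}(\mathcal D)$ is well defined (no ties at the cutoff), and $\bar{\mathcal B}_{L,r}(\mathcal D)$ is well defined because $\Delta>0$.

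Next, bound the numerator by $\max(\|E\bar V_r\|_2,\|E\T\bar U_r\|_2)\le\|E\|_2$. Combining with the bound on $\delta$ yields
\[
\|\sin\Theta\|_2\le \|E\|_2/\delta<2\|E\|_2/\Delta,
\]
which is \eqref{eq:sin_theta_robustness}. I expect this step to be the main obstacle: one has to state the version of Wedin's theorem with the mixed gap $\bar\sigma_r-\sigma_{r+1}$ correctly, including the convention for $\sigma_{r+1}$ when $\min(NL,\tau)=r$, where it is taken to be $0$. If that version proves awkward, I would instead apply Davis--Kahan~\cite{davis1970rotation} to the symmetric dilation $\begin{bmatrix}0&H\\ H\T&0\end{bmatrix}$. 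That route again gives the bound $\|E\|_2/(\Delta-\|E\|_2)\le 2\|E\|_2/\Delta$.

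Finally, pass to the distances. Both subspaces have dimension $r$, so $|k-\ell|=0$ in \eqref{eq:chordal_distance}--\eqref{eq:procrustes_distance}. Each distance is then a function of the $r$ principal angles, and each angle satisfies $\theta_i\in[0,\pi/2]$ with $\sin\theta_i\le\|\sin\Theta\|_2$. For the chordal distance,
\[
d_{\mathrm{ch}}=\Big(\sum_i\sin^2\theta_i\Big)^{1/2}\le\sqrt r\,\|\sin\Theta\|_2 .
\]
For the geodesic distance, $\theta\le\frac{\pi}{2}\sin\theta$ on $[0,\pi/2]$, so $d_{\mathrm{geo}}\le\frac{\pi}{2}\sqrt r\,\|\sin\Theta\|_2$. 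For the Procrustes distance, $2\sin^2(\theta/2)=1-\cos\theta\le\sin^2\theta/\cos\theta$. This is not uniformly bounded by a multiple of $\sin^2\theta$, so I would use instead $2\sin(\theta/2)\le\sqrt2\sin\theta$ on $[0,\pi/2]$. That gives $2\sin^2(\theta/2)\le\sin^2\theta$ and hence $d_{\mathrm{proc}}\le\sqrt r\,\|\sin\Theta\|_2$. Inserting the $\sin\Theta$ bound then gives \eqref{eq:distance_robustness}, with $C_{\mathrm{ch},r}=C_{\mathrm{proc},r}=2\sqrt r$ and $C_{\mathrm{geo},r}=\pi\sqrt r$.
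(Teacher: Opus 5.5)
Your proposal is correct and follows the paper's proof: Weyl's inequality lower-bounds Wedin's mixed separation $\sigma_r(\bar H_L)-\sigma_{r+1}(H_L)$ by $\Delta-\|E_L\|_2>\Delta/2$, and Wedin's $\sin\Theta$ theorem with numerator at most $\|E_L\|_2$ then gives \eqref{eq:sin_theta_robustness}; your explicit constants ($2\sqrt r$ for chordal and Procrustes, $\pi\sqrt r$ for geodesic) correctly fill in the final step that the paper only asserts. Two small points: $\|E\|_2/\delta<2\|E\|_2/\Delta$ should be non-strict to cover $E_L=0$, and your Davis--Kahan dilation fallback only bounds the left-subspace angle up to an extra factor $\sqrt2$ (the dilation eigenvectors $\frac{1}{\sqrt2}[\bar U_r;\bar V_r]$ split their weight between the two blocks), so the direct Wedin route is the one that gives the stated constant.
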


\begin{proof}
By Weyl's singular-value perturbation inequality \cite{horn1991topics}, $|\sigma_j(H_L)-\sigma_j(\bar H_L)|\leq\|E_L\|_2$; hence Wedin's separation is at least $\Delta(\mathcal D)-\|E_L(\mathcal D)\|_2$. Wedin's sin-$\Theta$ theorem \cite[Thm.~4.1]{wedin1972perturbation} gives
\[
  \|\sin\Theta\|_2\leq\frac{\|E_L(\mathcal D)\|_2}{\Delta-\|E_L(\mathcal D)\|_2}
  \leq\frac{2\|E_L(\mathcal D)\|_2}{\Delta(\mathcal D)}.
\]
Each selected Grassmannian distance is bounded by a rank-dependent constant times the norm of the sine of the principal-angle vector.
\end{proof}

\begin{corollary}[Robustness of the pairwise distance]
\label{cor:pairwise_robustness}
For two datasets $\mathcal D_1$ and $\mathcal D_2$ satisfying Theorem~\ref{thm:robustness}, let $\Delta_i:=\Delta(\mathcal D_i)$ for $i=1,2$. Then
\begin{equation}
    \begin{aligned}
    &\left|d_\star(\mathcal B_{L,r}(\mathcal D_1),\mathcal B_{L,r}(\mathcal D_2))
    -d_\star(\bar{\mathcal B}_{L,r}(\mathcal D_1),\bar{\mathcal B}_{L,r}(\mathcal D_2))\right|\\
    &\quad\leq C_{\star,r}\left(
    \frac{\|E_L(\mathcal D_1)\|_2}{\Delta_1}+
    \frac{\|E_L(\mathcal D_2)\|_2}{\Delta_2}\right).
    \end{aligned}
    \label{eq:pairwise_robustness}
\end{equation}
\end{corollary}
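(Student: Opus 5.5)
The plan is to combine the triangle inequality on the Grassmannian with the single-dataset bound \eqref{eq:distance_robustness} of Theorem~\ref{thm:robustness}. Abbreviate $\mathcal B_i:=\mathcal B_{L,r}(\mathcal D_i)$ and $\bar{\mathcal B}_i:=\bar{\mathcal B}_{L,r}(\mathcal D_i)$ for $i=1,2$. All four are $r$-dimensional subspaces of $\R^{NL}$, so the $|k-\ell|$ terms in \eqref{eq:chordal_distance}--\eqref{eq:procrustes_distance} vanish. On the Grassmannian $\mathrm{Gr}(r,NL)$ each $d_\star$ is then a genuine metric, as explained below, and in particular satisfies the triangle inequality.

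Assuming the metric property, I would apply the triangle inequality twice:
\[
d_\star(\mathcal B_1,\mathcal B_2)\le d_\star(\mathcal B_1,\bar{\mathcal B}_1)+d_\star(\bar{\mathcal B}_1,\bar{\mathcal B}_2)+d_\star(\bar{\mathcal B}_2,\mathcal B_2),
\]
and symmetrically
\[
d_\star(\bar{\mathcal B}_1,\bar{\mathcal B}_2)\le d_\star(\bar{\mathcal B}_1,\mathcal B_1)+d_\star(\mathcal B_1,\mathcal B_2)+d_\star(\mathcal B_2,\bar{\mathcal B}_2).
\]
Together these give the reverse-triangle form
\[
|d_\star(\mathcal B_1,\mathcal B_2)-d_\star(\bar{\mathcal B}_1,\bar{\mathcal B}_2)|\le d_\star(\mathcal B_1,\bar{\mathcal B}_1)+d_\star(\mathcal B_2,\bar{\mathcal B}_2).
\]
Since each dataset satisfies the hypotheses \eqref{eq:robustness_gap_condition} of Theorem~\ref{thm:robustness}, \eqref{eq:distance_robustness} bounds the term for dataset $i$ by $C_{\star,r}\|E_L(\mathcal D_i)\|_2/\Delta_i$. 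Here $C_{\star,r}$ depends only on $\star$ and $r$, so the same constant serves both datasets. Summing the two bounds yields \eqref{eq:pairwise_robustness}.

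The main obstacle is justifying the triangle inequality for each choice of $d_\star$ in the equal-dimensional case.

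For the chordal distance, $d_{\mathrm{ch}}$ equals $\tfrac{1}{\sqrt2}\|P_1-P_2\|_F$, where $P_i$ is the orthogonal projector onto the $i$-th subspace. It therefore inherits the triangle inequality from the Frobenius norm.

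For the Procrustes distance, $d_{\mathrm{proc}}$ equals $\tfrac{1}{\sqrt2}\min_{Q\in O(r)}\|U_1-U_2Q\|_F$. This quotient of a Frobenius distance by the orthogonal group action is a metric. For the triangle inequality, chain optimal rotations: $\|U_1-U_3Q_{23}Q_{12}\|_F\le\|U_1-U_2Q_{12}\|_F+\|U_2-U_3Q_{23}\|_F$, where the second term uses $\|U_2Q_{12}-U_3Q_{23}Q_{12}\|_F=\|U_2-U_3Q_{23}\|_F$ by orthogonal invariance of the Frobenius norm.

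For the geodesic distance, $d_{\mathrm{geo}}$ is the Riemannian arc-length distance on $\mathrm{Gr}(r,NL)$, and a Riemannian distance is a metric. I would cite these three standard identities rather than reprove them. If the normalization constants differ from the displayed formulas, only the value of $C_{\star,r}$ changes, not the structure of the argument.
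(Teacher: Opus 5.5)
Your proposal is correct and matches the paper's proof: the triangle inequality applied in both directions bounds the difference by $d_\star(\mathcal B_1,\bar{\mathcal B}_1)+d_\star(\mathcal B_2,\bar{\mathcal B}_2)$, and the single-dataset bound \eqref{eq:distance_robustness} of Theorem~\ref{thm:robustness} is then applied to each term. You also check that the chordal, Procrustes, and geodesic distances are genuine metrics on $\mathrm{Gr}(r,NL)$ (via the projector, orthogonal-Procrustes, and Riemannian characterizations), which is a step the paper takes for granted.
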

\begin{proof}
With $B_i:=\mathcal B_{L,r}(\mathcal D_i)$ and $\bar B_i:=\bar{\mathcal B}_{L,r}(\mathcal D_i)$, the triangle inequality in both directions bounds the absolute difference by $d_\star(B_1,\bar B_1)+d_\star(B_2,\bar B_2)$. Applying Theorem~\ref{thm:robustness} to these two terms gives \eqref{eq:pairwise_robustness}.
\end{proof}

Thus the measured pairwise distance remains close to its ideal counterpart when both ratios between the residual norm and the singular-value gap are small. The robustness conclusion is conditional on the small relative error because zero-mean residuals may still produce a non-negligible perturbation in the empirical Hankel matrix.
Data amount, dictionary choice, and horizon affect the empirical perturbation and singular-value gap, motivating the studies below.

\section{Simulation Results}
\label{sec:simulations}

This section illustrates three properties of the proposed Koopman-behavioral metric: (i) invariance under nonlinear coordinate transformations when appropriate lifted observables are used, (ii) non-invariance of raw-state behavioral metrics under nonlinear coordinate changes, and (iii) sensitivity of the lifted metric to genuine changes in system parameters.
The examples also separate algebraic invariance from statistical robustness. Within each comparison, the observable map, feature preprocessing, horizon, and dominant-subspace rule are held common across the systems. Distances are interpreted within a fixed dictionary; absolute values from dictionaries with different dimensions or scalings are not treated as directly comparable.

\subsection{Example 1: Nonlinear Coordinate Shear}

We begin with the stable linear stochastic system in $\R^2$
\begin{equation}
  z_{t+1}=Az_t+\eta_t,
  \label{eq:z_linear_system}
\end{equation}
\[
\text{where}\quad
A=\begin{bmatrix}0.95&0\\0&0.55\end{bmatrix},\quad
  \eta_t\sim\mathcal N(0,\sigma^2I_2),\quad \sigma=0.07.
\]
Since the eigenvalues of $A$ lie inside the unit disk, the process is stable. We introduce the nonlinear diffeomorphic coordinate transformation
\begin{equation}
  x=h(z)=\begin{bmatrix}z_1\\z_2+cz_1^2\end{bmatrix},\qquad
  h^{-1}(x)=\begin{bmatrix}x_1\\x_2-cx_1^2\end{bmatrix}.
  \label{eq:shear_map}
\end{equation}
To compare behaviors across coordinate systems, we use the identity lifting $\psi_z(z)=z$ and the pullback lifting $\psi_x(x)=h^{-1}(x)$. By construction, $\psi_x(x_t)=z_t$ for all $t$. For comparison, raw-state observables use $\psi_x^{\mathrm{raw}}(x)=x$.
The distinct stable modes of $A$ and $\sigma=0.07$ provide persistent excitation, while varying $c$ changes only the coordinate representation and not the underlying trajectories in $z$. We use $r=6$: panels (a)-(b) sweep $c\in[0,3]$ with $M=50$, $T=350$, and $L\in\{5,10,15\}$; panel (c) plots the total sample count $MT$ for $M\in\{5,20,50,100\}$, $T\in\{60,100,180,350\}$, and the same $L$. Here, $M$ denotes the number of independent rollouts, $T$ is the number of time samples in each rollout, and $L$ is the trajectory-window or behavioral horizon used to construct the lifted data matrices.
In Fig.~\ref{fig:example1}, paired pullback matrices in (a) coincide pathwise, whereas raw distances grow with $c$. For independent data in (b), the pullback baseline is nearly independent of $c$; in (c), it decreases with data while raw-coordinate bias persists. For independently sampled datasets, the estimated behavioral subspaces approach the same population subspace as the amount of data increases, although their distance may remain nonzero at finite sample sizes.

\begin{figure*}
  \centering
  \includegraphics[width=0.93\textwidth]{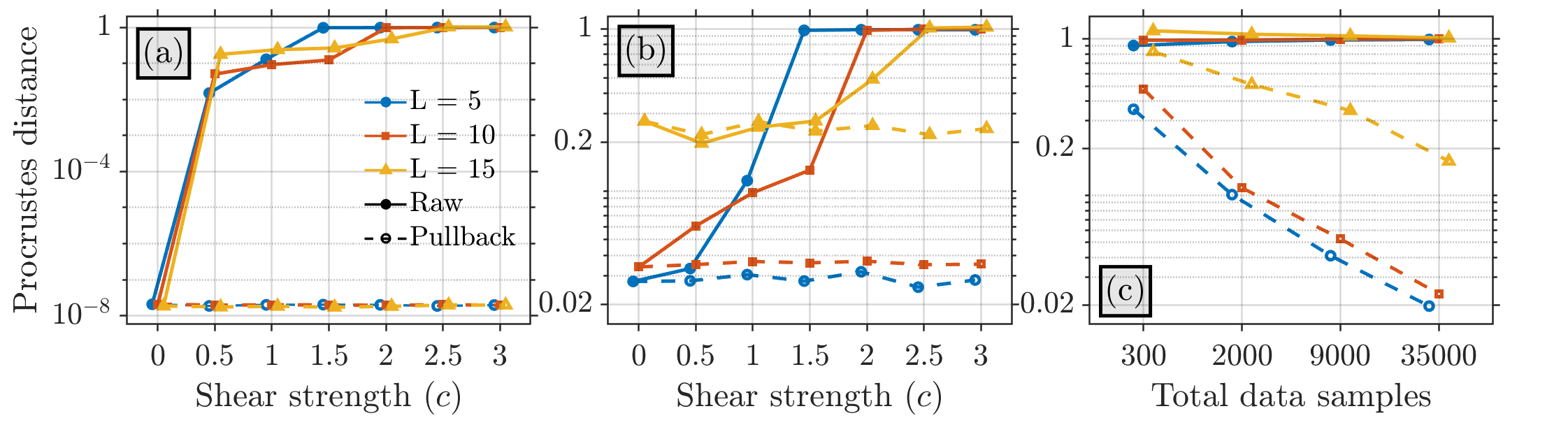}
  \caption{Nonlinear coordinate-shear experiment with dominant rank $r=6$ and horizons $L=5,10,15$. (a) Paired trajectories ($M=50$, $T=350$) share the same noise realizations; pullback lifting recovers identical finite-horizon behaviors up to numerical precision, while raw-state distances increase with shear strength $c$. (b) For independent trajectories with the same $M$ and $T$, pullback distances reflect sampling variability rather than coordinate distortion and remain approximately insensitive to $c$. (c) Across $M\in\{5,20,50,100\}$ and $T\in\{60,100,180,350\}$, increasing $MT$ reduces the independent-data pullback distance, whereas the raw-coordinate discrepancy remains.}\label{fig:example1}
\end{figure*}

\subsection{Example 2: Planar Quadrotor}

To demonstrate sensitivity to physical parameter changes, we consider a planar quadrotor with varying mass. We generate reference datasets for
\[
  m\in\{0.6,0.8,1.0,1.2,1.4\}\ \mathrm{kg},
\]
and query datasets at $m_q\in\{0.8,1.0,1.2\}$ kg. Using multiple interior queries avoids an endpoint-only or single-target conclusion.
The state is $x=[p_y,p_z,\phi,\dot p_y,\dot p_z,\dot\phi]\T$, where $p_y$ and $p_z$ denote lateral position and altitude, $\phi$ is the roll angle, and the remaining states are the corresponding velocities. The planar quadrotor dynamics are
\begin{equation}
\begin{aligned}
  \ddot p_y=\frac{-u_1\sin\phi+\eta_y}{m}, \;
  \ddot p_z=\frac{u_1\cos\phi+\eta_z}{m}-g, \;
  \ddot\phi=\frac{u_2}{J}.
\end{aligned}
\label{eq:quadrotor_dynamics}
\end{equation}
The system is discretized using a Runge-Kutta scheme. To generate informative trajectories, the quadrotor tracks a figure-8 reference under a fixed cascaded PD controller designed for the nominal mass $1.0$ kg. The resulting closed-loop system is treated as a noise-driven autonomous system, with $v_y$ and $v_z$ representing stochastic wind disturbances.
Because mass directly affects translational acceleration and the vertical motion shows most sensitivity to the parameter change based on our study, therefore we focus on the following observable dictionaries:
\[
\begin{aligned}
  \psi_{\mathrm{lin}}(x)
    &:=[p_z,\dot p_z]\T,\quad
  \psi_{\mathrm{elem}}(x)
    :=[p_z,\dot p_z,p_z^2,\dot p_z^2]\T,\\
  \psi_{\mathrm{full}}(x)
    &:=[p_z,\dot p_z,p_z^2,p_z\dot p_z,\dot p_z^2]\T.
\end{aligned}
\]
The dictionaries are fixed in advance and applied identically to every reference and query dataset. All panels use $r=12$ and $T=200$. Panel (a) uses $M=100$, $L=8$, and $m\in[0.6,1.4]$ kg. Panels (b),(d) use $L=8$ and $M\in\{25,50,100,200,500,800,1000,1500\}$; panel (c) uses $M=100$ and $L\in\{4,6,8,10,12\}$. The repeated-queries study the full nearest-bank comparison over independent trials.

Figure~\ref{fig:example2}(a) reports median chordal distances over $100$ independent trials. For each query mass, the minimum occurs at the matching reference-bank mass, showing that the metric detects a physical parameter change rather than a coordinate change. Panel (b) aggregates the three interior query masses and reports nearest-bank classification accuracy. Accuracy improves with $M$, and the quadratic dictionaries provide clearer separation in the limited-data regime because they represent nonlinear vertical dynamics not present in the linear dictionary. Panel (c), with $M=100$, isolates the horizon effect and indicates that performance is best near $L=6$-$8$. Short horizons contain less temporal information, whereas larger $L$ increases the lifted row dimension, leaves fewer overlapping length-$L$ segments per rollout, and exposes the estimate to more accumulated closure residual.

Let $\mathcal B_{L,r}^{\mathrm q}(m_q)$ and $\mathcal B_{L,r}^{\mathrm{ref}}(m)$ denote the query and reference dominant subspaces, and define
\[
\begin{aligned}
 d(m_q,m)
   &:=d_{\mathrm{ch}}\!\left(
      \mathcal B_{L,r}^{\mathrm q}(m_q),
      \mathcal B_{L,r}^{\mathrm{ref}}(m)\right),\\
 d_{\mathrm{true}}&:=d(m_q,m_q),\qquad
 d_{\mathrm{wrong}}:=\min_{m\neq m_q}d(m_q,m),\\
 \mu&:=d_{\mathrm{true}}-d_{\mathrm{wrong}}.
\end{aligned}
\]

Panel (d) uses $\mu=d_{\mathrm{true}}-\min_{m\neq m_q}d(m_q,m)$. Thus, $\mu<0$ means that the correct mass is closer than every incorrect candidate, and a more negative value indicates a larger identification margin. The increasingly negative margin with $M$ complements the accuracy curve by showing that the decision becomes not only more often correct but also better separated as the empirical subspaces stabilize. 
All candidates within a curve use the same features. Relevant nonlinear terms can improve separation, whereas unnecessary features may increase estimation variance.

\begin{figure*}
  \centering
  \includegraphics[width=0.98\textwidth]{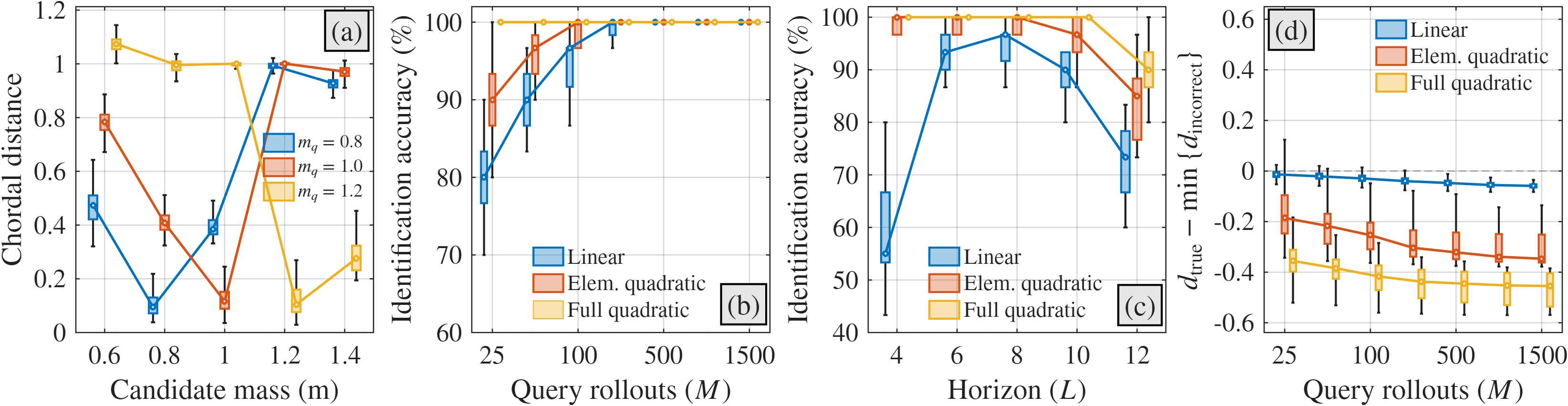}
  \caption{Planar-quadrotor mass discrimination using $r=12$ and rollout length $T=200$. (a) Median chordal distance over $100$ trials from each query mass to the five-mass reference bank ($M=100$, $L=8$); minima at the matching masses indicate correct identification. (b) Nearest-bank accuracy versus $M\in\{25,50,100,200,500,800,1000,1500\}$ at $L=8$, aggregated over query masses $0.8$, $1.0$, and $1.2$ kg. (c) Accuracy versus $L\in\{4,6,8,10,12\}$ at $M=100$, illustrating the tradeoff between temporal context and finite-sample estimation. (d) Margin $\mu=d_{\mathrm{true}}-d_{\mathrm{wrong}}$ versus the same $M$ values at $L=8$; negative values identify the correct mass, and larger negative magnitude indicates stronger separation.}\label{fig:example2}
\end{figure*}

\section{Conclusion and Future Directions}
\label{sec:discussion}

This paper developed a data-driven framework for comparing nonlinear dynamical systems from trajectory data by bringing together Koopman operator theory and behavioral system theory. Rather than comparing models, operators, or probability distributions, the proposed framework compares the geometry of finite-horizon behaviors, providing a coordinate-aware measure of similarity between nonlinear systems. This connection extends the scope of classical behavioral distances beyond linear systems and offers a new perspective for data-driven analysis of nonlinear dynamics.

The proposed framework also highlights the importance of selecting informative observable representations, since the quality of the comparison ultimately depends on how well the lifted coordinates capture the underlying system behavior. Future work will focus on controlled and time-varying systems, systematic construction of observable dictionaries, and applications to large-scale dynamical systems, model retrieval, fault detection, and digital twins. 






\bibliographystyle{ieeetr}
\bibliography{references}

\end{document}